\newif\ifSC
\def\BibTeX{{\rm B\kern-.05em{\sc i\kern-.025em b}\kern-.08em
		T\kern-.1667em\lower.7ex\hbox{E}\kern-.125emX}}
\newcounter{relctr} %% <- counter for relations
\everydisplay\expandafter{\the\everydisplay\setcounter{relctr}{0}} %% <- reset every eq
\newcommand\numeq[1]%
\newcommand{\x}{\bm{x}}
\newcommand{\y}{\bm{y}}
\newcommand{\z}{\bm{z}}
\newcommand{\1}{\mathbbm{1}}
\newcommand{\pt}{\mathrm{p}}
\newcommand{\drm}{\mathrm{d}}
\newcommand{\ob}{\mathrm{o}}
\newcommand{\dv}{\mathrm{d}}
\newcommand{\B}{\mathcal{B}(\ob,r_\drm)}
\newcommand{\PGFL}{\mathcal{G}}
\newcommand{\rx}{\bm{X}}
\newcommand{\mo}{\overline{m}}
\newcommand{\A}{\mathcal{A}}
\newcommand{\num}{{N}}
\newcommand{\R}{\mathbb{R}}
\newcommand{\cF}{\overline{F}}
\newcommand{\matern}{Mat\'ern~}
\newcommand{\set}[1]{\mathsf{#1}}
\newcommand{\dist}[1]{\|#1\|}
\newcommand{\palmprobx}[2]{\mathbb{P}^{#2}{\left[#1\right]}}
\newcommand{\dd}{\mathrm{d}}
\newcommand\expect[1]{\mathbb{E}\left[#1\right]}
\newcommand\prob[1]{\mathbb{P}\left[#1\right]}
\newcommand\indside[1]{\mathbbm{1}\left({#1}\right)}
\newcommand{\Ball}{\mathcal{B}}
\newcommand{\ie}{{\em i.e.}~}
\newcommand\summ{N}
\newcommand{\bigconditioned}{\left.\vphantom{\frac34}\right|}
\newcommand{\X}{\mathbf{X}}
\newcommand{\Y}{\mathbf{Y}}
\newcommand{\Z}{\mathbf{Z}}
\newcommand{\expS}[1]{\exp{\left(#1\right)}}
\def\home{\hbox{\kern3pt \vbox to13pt{}% 
   \pdfliteral{q 0 0 m 0 5 l 5 10 l 10 5 l 10 0 l 7 0 l 7 5 l 3 5 l 3 0 l f
               1 j 1 J -2 5 m 5 12 l 12 5 l S Q }%
   \kern 13pt}}
\newtheorem{theorem}{Theorem}
\newtheorem{corollary}{Corollary}[theorem]
\newtheorem{remark}{Remark}
\begin{document}
% Linebreaks \\ can be used within to get better formatting as desired.
% Do not put math or special symbols in the title.
\title{\LARGE $k$th Distance Distributions of $n$-Dimensional \matern Cluster Process}
\author{Kaushlendra Pandey, Abhishek K. Gupta\,\,\vspace{-2em}
	\thanks{K. Pandey and A. K. Gupta are with IIT Kanpur, India, 208016. Email:\{kpandey,gkrabhi\}@iitk.ac.in.}
} 

\maketitle

\begin{abstract}
In this letter, we derive the CDF (cumulative distribution function) of $k$th contact distance (CD) and nearest neighbor distance (NND) of the $n$-dimensional ($n$-D) \matern cluster process (MCP).
We present a new approach based on the probability generating function (PGF) of the random variable (RV) denoting the number of points in a ball of arbitrary radius
to derive its probability mass function (PMF). The proposed method is general and can be used for any point process with known probability generating functional (PGFL). We also validate our analysis via numerical simulations and provide  insights using the presented analysis. We also discuss two applications, namely- macro-diversity in cellular networks and caching in D2D networks, to study the impact of clustering on the performance. 
\end{abstract}

\IEEEpeerreviewmaketitle
\vspace{-1em}
\section{Introduction}
 Owing to the tractability and simplicity, Poisson point process (PPP) where each point is distributed independently of other points, has been a popular choice to model nodes' location in wireless networks.
However, in  modern networks, nodes may exhibit clustering or attraction toward each other. Examples include small cells  that are usually clustered
around hotspot areas such as downtown \cite{saha20183gpp} and device-nodes  located according to uneven population distribution in a  device-to-device (D2D) network\cite{afshang2016modeling}.  In such cases, the nodes can be more appropriately modeled as a clustered point process compared to PPP. Poisson cluster process is a widely used cluster process that has cluster centers (known as parent points) distributed as a PPP and each cluster is independent and identically distributed. Examples of PCP include MCP which is best suited for the modeling and analyses of network patterns in the urban area \cite{lee2013stochastic}. The interference, coverage, and rate analysis of various clustered wireless networks using MCP is presented in \cite{azimi2018stochastic,afshang2016modeling,wang2019performance,saha20183gpp}  based on the PGFL of MCP. 
To analyze various performance metrics of a wireless network, stochastic geometry utilizes the CDF of $k$th CD (\ie the distance of $k$th closest point from the origin) and $k$th NND (\ie the distance of $k$th closest neighbor from a typical point). The knowledge of these distributions is useful in many applications. Two important examples include the analysis of cache-enabled D2D networks, where the content of interest may be cached at the $k$th closest device {\cite{Afshang-2016} and the analysis of geolocation, where we may need ranging measurements from at least $k$ anchor nodes to obtain a position fix \cite{Schloemann-2016}. The distribution can be used to derive the Ripley's K function and the pair correlation function for any motion invariant process, which is useful in characterizing the point process (PP) of interferers in less tractable settings \cite{Mankar2019}. 
In a cellular network with $k$ successive interference cancellation \cite{zhang2014performance}, the interference at a typical user is due to base-stations (BSs) located further away than the $k$th closest BS and its coverage analysis requires the CDF of $k$th CD. In sensor networks, it can be used to compute the probability that an event is sensed by at least $k$ sensors \cite{PanGup2020}. 
Other applications include computation of  dominant interference,  optimum hops, power requirement to send data to the $k$th nearest device, analysis of cooperation diversity \cite{haenggi2005distances} and performance of localization of nodes. 
The CDF $F_{R_k}(r)$ of the $k$th CD  for any PP is equal to the probability that there are at least $k$ points inside the ball of radius $r$. Similarly, for the NND, CDF is equal to the probability that there are at least $k$ points inside the ball of radius $r$ under Palm \cite{AndGupDhi16}. 
For MCP, the CDFs of CD  and NND for $k=1$ is  reported in \cite{afshang2017nearest,pandey2019contact}. %The expression presented 
  To the best of our knowledge, the expression for the $k$th CD and NND of MCP for general $k$ has not been reported in the past  literature.  The main issue is in calculating the probability that there are  $m$ points inside a ball whose expression is not available in the literature.

In this letter, we present a new approach  to derive the CDF of $k$th CD and NND for a $n$-D MCP using PGF-PMF relation. Using  the PGFL of MCP, we first derive the expression for the PGF for a RV denoting the number of points in a ball of radius $r$  which is a non-negative positive integer valued RV. Utilizing the relation between PGF and PMF of the RV of this form, % which is a non-negative positive integer valued random variable,
we derive  the PMF of this RV which tells the probability that there are $m$ points in this ball. Using the  PMF, we present the CDF %{ and probability distribution function (PDF)} 
of $k$th CD. The same exercise is repeated under conditioning on the occurrence of a point  to get  the CDF of $k$th NND. %
% Once we have the expression of PMF, we can arrive at the expression of CCDF of $k$-th CD. A similar technique can be used for $k$-th NND. 
The technique presented in this paper is generic and can be used to derive the $k$th CD and NND of any PP with known PGFL and conditional PGFL including other variants of PCP. We considered MCP  to demonstrate this technique owing to its suitably to model modern wireless networks.  This PGF-PMF relation has been used in the past to compute the load distribution of Voronoi cells \cite{SinDhiJ2013}. We also discuss two applications, namely- macro-diversity in cellular networks and caching in D2D networks, and study the impact of clustering on the performance of these networks. 
\\
\textbf{Notation:} 
For a location $\x$,  $x=||\x||$.    $\Ball(\x,r)$ denotes a ball with radius $r$ and center $\x$. 
The notation $\A(r,r_\drm,x)$ denotes the volume of intersection of two $n$-D balls 
$\Ball(\ob,r)$ and $\Ball(\x, r_\drm)$. 
%of radius $r,r_\drm$ respectively and the centers of the balls are $x$ distance apart. 
 $f^{(k)}(y)$ {denotes} the $k$th { derivative} of $f$ with respect to $y$. $f^{(0)}(y)=f(y)$. The volume of unit $n$-D ball  is $v_n={\pi^{n/2}}/{\Gamma(\frac{n}{2}+1)}$. Let $\beta(r)=2\min(r,r_\drm)$.

\section{\matern Cluster Process}
In this paper, we consider a  $n$-D MCP which is defined as follows. 
Let there be a parent PP which is a uniform PPP $\Phi_\pt=\{\X_i:\X_i\in\R^n\}$ with intensity $\lambda_{\pt}$. To each  point $\X_i$ in the parent PP, associate a daughter PP (also known as cluster) $\Phi_{\X_i}$ such that  $\Phi_{\X_i}$ is a finite homogeneous PPP with density $\lambda_\drm(\y)$ confined in a ball of $\B$ \ie 
%Note that the intensity $\lambda_{\drm}(\y)$ of a daughter PP is:
\begin{align*}
\lambda_{\drm}{(\y)}&=\lambda_{\drm}\times\1(\dist{\y}\leq r_\drm).
\end{align*}
This means that the number of points in each daughter PP is  Poisson distributed  with mean $\mo=\lambda_\drm{v_n r_\drm^n}$.  
All daughter PPs are identically and independently distributed. 
For $i$-th daughter PP, let $ \Y^{i}_j$ denote the location of $j$th point. The corresponding point %($j$th point of the $i$th cluster) 
 in $\Phi$ is located at  $\Z_j^i=\X_i+\Y^{i}_j$. Note that $\Y_j^i$ is uniformly distributed in $\B$ independently of other points. 
The MCP $\Phi$ is defined as the following union
\begin{align*}
\Phi=\bigcup\nolimits_{\X_i\in\Phi_\pt}\X_i+\Phi_{\X_i}.
\end{align*}
MCP is characterized by its PGFL $\PGFL[v]$.  If $v:\R^{n}\rightarrow[0\,\,1]$ and $\int_{\R^{n}}(1-v(\x))\dv \x<\infty$, $\PGFL[v]$ is given as \cite{haenggibook}:
\begin{align}
\PGFL[v]&=\exp\left(-\lambda_{\pt}\int_{\R^{n}}\left(1-\mathcal{H}_{\x}[v] \right)\dv \x\right).\label{PGFL}\\
%\end{align}
%\begin{align*}
\text{with }\mathcal{H}_{\x}[v] &=\exp\left(-\lambda_\drm
\int_{\Ball(\ob,r_\drm)}\left(1-v(\x+\y)\right)\dv\y
%\right]
\right)\nonumber
\end{align}
\section{$k$-th Contact Distance Distribution}\label{sec:CD:PMF}
Let $R_k$ denote the $k$th CD \ie distance of the $k$th closest point  of the MCP from an arbitrary point. As discussed, the  CCDF of $R_k$: $\overline{F}_{R_k}(r)$ is equal to the probability that there are at least $k$ points inside the ball $\Ball(\ob,r)$. Let $\summ$ be the RV denoting the number of point in $\Ball(\ob,r)$. Then, 
\begin{align}
&{F}_{R_k}(r)=1-\prob{\summ<k}=1-\sum\nolimits_{m=0}^{k-1}\mathbb{P}\left[\summ=m\right].\label{eq:cdf_rk_1}
\end{align}
To proceed further, we  need the probability that  $\Ball(\ob,r)$ contains $m$ points which is the PMF of $\summ$ at $m$. 
\subsection{PGF of $\summ$}
To get PMF of $\summ$, we will first present its PGF using PGFL of MCP as given in the  Theorem 1 (See Appendix \ref{Appn:1} for proof).
\begin{theorem}\label{Thm:1}
	The PGF $\mathcal{P}_{\summ}(s)$, of $\summ$ is  % (see Appendix \ref{Appn:1} for proof) :
	\begin{align}
	&\mathcal{P}_{\summ}(s)=\expect{s^{N}}=\exp(g(s))\label{PGF}, \ \ \ \ \ \ \ \text{where\,} \\
	&g(s)=v_n\lambda_{\pt}\left(\int_{x=0}^{r+r_\drm}ne^{\lambda_\drm\A(r,r_\drm,x)(s-1)}x^{n-1}\dv x-(r+r_\drm)^n\right)\nonumber.
	\end{align}
\end{theorem}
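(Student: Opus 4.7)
The plan is to express $N$ as a sum over the MCP and apply the PGFL formula \eqref{PGFL} with a carefully chosen test function $v$, then exploit the spherical symmetry of $\Ball(\ob,r)$ to collapse the resulting multidimensional integral to a one-dimensional radial integral.

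First, I would write $N = \sum_{\Z \in \Phi}\1(\Z \in \Ball(\ob,r))$, so that
\begin{align*}
\mathcal{P}_N(s) = \expect{s^N} = \expect{\prod_{\Z \in \Phi} s^{\1(\Z \in \Ball(\ob,r))}}.
\end{align*}
This is precisely the PGFL evaluated at the test function $v(\x) = s\,\1(\x \in \Ball(\ob,r)) + \1(\x \notin \Ball(\ob,r))$, equivalently $1-v(\x) = (1-s)\1(\x \in \Ball(\ob,r))$, which is bounded in $[0,1]$ (for $s\in[0,1]$) and has finite integral, so \eqref{PGFL} applies.

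Next I would compute $\mathcal{H}_\x[v]$. The inner integral becomes
\begin{align*}
\int_{\Ball(\ob,r_\drm)}(1-v(\x+\y))\dv \y = (1-s)\,\mathrm{vol}\bigl\{\y \in \Ball(\ob,r_\drm): \x+\y\in\Ball(\ob,r)\bigr\},
\end{align*}
and by translation the volume on the right equals the volume of intersection of $\Ball(\ob,r)$ and $\Ball(\x,r_\drm)$, i.e.\ $\A(r,r_\drm,x)$ with $x=\|\x\|$. Hence $\mathcal{H}_\x[v]=\exp(\lambda_\drm\A(r,r_\drm,x)(s-1))$, which depends on $\x$ only through $x$.

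Finally I would exploit spherical symmetry: substituting into \eqref{PGFL} and switching to spherical coordinates (using that the surface area of the unit $n$-sphere is $nv_n$), the integrand vanishes for $x>r+r_\drm$ since $\A(r,r_\drm,x)=0$ there, so
\begin{align*}
\int_{\R^n}\bigl(1-\mathcal{H}_\x[v]\bigr)\dv\x
&= nv_n\int_0^{r+r_\drm}\bigl(1-e^{\lambda_\drm\A(r,r_\drm,x)(s-1)}\bigr)x^{n-1}\dv x\\
&= v_n(r+r_\drm)^n - nv_n\int_0^{r+r_\drm}e^{\lambda_\drm\A(r,r_\drm,x)(s-1)}x^{n-1}\dv x,
\end{align*}
where the first term follows from $n\int_0^{r+r_\drm}x^{n-1}\dv x=(r+r_\drm)^n$. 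Multiplying by $-\lambda_\pt$ and taking $\exp$ yields $\exp(g(s))$ exactly as claimed.

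The only nontrivial step is identifying the inner Lebesgue measure as $\A(r,r_\drm,x)$; once the test function is set up and the spherical-symmetry reduction is invoked, the remainder is elementary bookkeeping. I therefore expect no real obstacle beyond verifying the integrability hypothesis of \eqref{PGFL} (which is immediate because $1-v$ is compactly supported) and the volume identification.
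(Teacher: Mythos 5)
Your proposal is correct and follows essentially the same route as the paper's Appendix~\ref{Appn:1}: evaluate the PGFL at $v(\x)=1+(s-1)\1(\x\in\Ball(\ob,r))$, identify the inner integral in $\mathcal{H}_\x[v]$ as $\lambda_\drm(s-1)\A(r,r_\drm,\|\x\|)$, and reduce to a radial integral. You actually spell out the final polar-coordinate step and the origin of the $(r+r_\drm)^n$ term, which the paper leaves as ``simplifying the above expression.''
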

 The expression of $\A(r,r_\drm,x)$ for $n=1,2,3$ is well known \cite{pandey2019contact}.
For $n=1$,  the closed form expression for PGF of $\summ$ is given in  Corollary \ref{PGF_n_1}. For higher dimensions ($n>2)$, tight bounds for expressions can be computed using bounds over $\A(r,r_\drm,x)$. 
% For more details on the $\A(r,r_\drm,x)$ and the bounds over it 
 Readers are advised to refer \cite{pandey2019contact} for the same.
%{\color{red}the following Not correct}
{\begin{corollary}\label{PGF_n_1}
	For $n=1$, %\vspace{-.11in}
	\begin{small}
	\begin{align*}\A(r,r_\drm,x)=
	\begin{cases}
\beta(r) & \text{if }0\leq x\leq |r-r_\drm|\\
r+r_\drm-x &\text{if }|r-r_\drm|\leq x\leq (r+r_\drm)
	\end{cases}.
	\end{align*}\end{small}
Therefore, the PGF of $\summ$ is given by \eqref{PGF} with 
		\begin{align*}
	&g(s)=2\lambda_{\pt}\left[|r-r_\drm|e^{\lambda_{\drm}\beta(r)(s-1)}-(r+r_\drm)+\frac{e^{\lambda_{\drm}(s-1)\beta(r)}-1}{\lambda_\drm(s-1)}\right].
		\end{align*}		
\end{corollary}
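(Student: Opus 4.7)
The plan is to specialize the general PGF formula from Theorem~\ref{Thm:1} to $n=1$, which requires two ingredients: an explicit formula for the one-dimensional intersection length $\A(r,r_\drm,x)$, and closed-form evaluation of the resulting one-dimensional integral.

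For the first ingredient, I would compute $\A$ directly from its definition. In one dimension, $\Ball(\ob,r)=[-r,r]$ and $\Ball(\x,r_\drm)=[x-r_\drm,x+r_\drm]$ for $x=\|\x\|\geq 0$, so the intersection is itself an interval. A short case analysis on the sign of $x-|r-r_\drm|$ and $x-(r+r_\drm)$ gives three regimes: the smaller ball sits inside the larger one (intersection length equals $\min(2r,2r_\drm)=\beta(r)$), the balls overlap partially (intersection length equals $r+r_\drm-x$), or they are disjoint. This reproduces the piecewise expression for $\A$ claimed in the corollary.

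For the second ingredient, I would substitute $n=1$ into $g(s)$ from Theorem~\ref{Thm:1}, using $v_1=2$ and $n x^{n-1}=1$, giving
$$g(s)=2\lambda_\pt\left(\int_0^{r+r_\drm} e^{\lambda_\drm\A(r,r_\drm,x)(s-1)}\,\dv x-(r+r_\drm)\right).$$
The piecewise form of $\A$ splits this integral at $x=|r-r_\drm|$. On $[0,|r-r_\drm|]$ the integrand is the constant $e^{\lambda_\drm\beta(r)(s-1)}$, contributing $|r-r_\drm|\,e^{\lambda_\drm\beta(r)(s-1)}$. On $[|r-r_\drm|,r+r_\drm]$ the change of variables $u=r+r_\drm-x$, together with the identity $r+r_\drm-|r-r_\drm|=2\min(r,r_\drm)=\beta(r)$, turns the integral into $\int_0^{\beta(r)} e^{\lambda_\drm u(s-1)}\,\dv u=(e^{\lambda_\drm\beta(r)(s-1)}-1)/(\lambda_\drm(s-1))$. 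Summing the two contributions and subtracting $(r+r_\drm)$ yields exactly the expression stated in the corollary.

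There is no substantial obstacle; the entire derivation is a direct specialization of Theorem~\ref{Thm:1} followed by an elementary exponential integral. The only arithmetic point that needs attention is the identity $r+r_\drm-|r-r_\drm|=\beta(r)$, which tightly couples the breakpoint $|r-r_\drm|$ of the piecewise $\A$ with the exponent factor $\beta(r)$ and is what makes the two pieces of the integral combine into the compact closed form above.
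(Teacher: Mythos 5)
Your proposal is correct and follows exactly the route the paper intends: the corollary is a direct specialization of Theorem~\ref{Thm:1} to $n=1$ (with $v_1=2$, $nx^{n-1}=1$), using the interval-intersection formula for $\A(r,r_\drm,x)$ and splitting the integral at $|r-r_\drm|$, where the identity $r+r_\drm-|r-r_\drm|=\beta(r)$ makes the second piece evaluate to $(e^{\lambda_\drm\beta(r)(s-1)}-1)/(\lambda_\drm(s-1))$. The paper omits this routine calculation (citing prior work for the expression of $\A$), and your derivation fills it in correctly with no gaps.
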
}

\subsection{PMF of $\summ$}
We now derive the PMF of $\summ$ from its PGF. 
We note that  $\summ$ is a non-negative integer valued RV. 
There exists the following relation between PGF and PMF 
of a non-negative integer valued RV, $N$ 
\begin{align}
\mathbb{P}(\summ=0)&=\mathcal{P}_{\summ}(s=0),
\end{align}
\begin{align}
\mathbb{P}\left(\summ=m\right)&=\left[{\mathcal{P}^{(m)}_{\summ}(s)}/{m!}\right]_{s=0}\label{PGF_PMF_relationship}.
\end{align}
$\mathcal{P}_{\summ}(s)$ is of the form $\expS{g(s)}$. Hence, to find the $m$-th derivative of $\mathcal{P}_{\summ}(s)$, we will use
the Faà di Bruno's formula which states that $m$-th derivative of $\exp(g(s))$ with respect to $s$ is given as
\begin{align}\label{Faa-di-bruno}
\!\!\!
&\frac{\dd^m }{\dd s^m }\expS{g(s)}=\exp\left(g(s)\right)\sum\frac{m!}{b_1!\cdots b_m!}
\left(\frac{g^{(1)}(s)}{1!}\right)^{\!\!b_1}\cdots\left(\frac{g^{(m)}(s)}{m!}\right)^{\!\!b_m}
\end{align}
where the sum is over set $\set{B}_m$ consisting of all $m-$tuples  $\{b_1\cdots b_m\}$ with $b_i\ge0$ and 
$b_1+2b_2...+mb_m=m$.
Let us denote $h_k(r)=g^{{(k)}}(0)/k!,\, \text{and}\,\, h_{0}(r)=g(0)$. 
Using the  the Faa di Bruno's formula \eqref{Faa-di-bruno} in the PGF of $\summ$  \eqref{PGF}, the PMF  of $\summ$ is given as
\begin{align}
&\prob{\summ=m}=e^{-\lambda_{\pt}v_n(r+r_\drm)^n}e^{h_0(r)}\sum_{\set{B}_m}\frac{\left({h_1(r)}\right)^{b_1}\cdots \left({{h_m(r)}}\right)^{b_m}}{b_1!\cdots b_m!}.\label{eq:pmf1a}
\end{align}

Now, it remains to compute $h_k(r)$. The $k$th derivative of $g(s)$- $g^{(k)}(s)$  can be computed as
\begin{align*}
	&g^{{(k)}}(s)=\lambda_{\pt}nv_n \int_0^{r+r_\drm}\!\!\!\! (\lambda_\drm \A(r,r_\drm,x))^{k}e^{\lambda_\drm \A(r,r_\drm,x)(s-1)} x^{n-1}\dv x.
\end{align*}
Therefore, 
\begin{align*}
&h_k(r)=
\frac{\lambda_{\pt}nv_n}{k!} \int_0^{r+r_\drm} (\lambda_\drm \A(r,r_\drm,x))^{k}e^{-\lambda_\drm \A(r,r_\drm,x)} x^{n-1}\dv x.
\end{align*}
Moreover,
\begin{align*}
&h_0(r)=
{\lambda_{\pt}nv_n} \int_0^{r+r_\drm} e^{-\lambda_\drm \A(r,r_\drm,x)} x^{n-1}\dv x.
\end{align*}

Using the values of $h_k(r)$ in \eqref{eq:pmf1a} and then using \eqref{eq:cdf_rk_1}, we get % and the definition of CCDF we get 
the CDF of $R_k$ which is given in the following Theorem.

\begin{theorem}\label{Thm:2}
	The CDF of $R_{k}$ is% (see Appendix \ref{proof of kthCD} for proof):
	\begin{align}
	F_{R_{k}}(r)&=1-e^{h_0(r)-\lambda_{\pt}v_n(r+r_\drm)^n}\sum_{m=0}^{k-1}\left[
	%\sum\frac{\left({h_1(r)}/{1!}\right)^{b_1}...\left({h_m(r)}/{m!}\right)^{b_m}}{b_1!..b_m!}
	\sum_{\set{B}_m}\frac{\left({h_1(r)}\right)^{b_1}\cdots \left({{h_m(r)}}\right)^{b_m}}{b_1!\cdots b_m!}
	\right].
	\end{align}	
\end{theorem}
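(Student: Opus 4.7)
The plan is to combine the identity $F_{R_k}(r) = 1 - \sum_{m=0}^{k-1}\prob{\summ=m}$ from \eqref{eq:cdf_rk_1} with an explicit expression for the PMF of $\summ$, which I would obtain by inverting its PGF from Theorem~\ref{Thm:1}. Since $\summ$ is a non-negative integer-valued RV, the standard PGF-to-PMF identity $\prob{\summ=m}=\mathcal{P}_{\summ}^{(m)}(0)/m!$ reduces the task to differentiating $\mathcal{P}_{\summ}(s)=\expS{g(s)}$ a total of $m$ times and evaluating at $s=0$.

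To handle the $m$-th derivative uniformly in $m$, I would invoke Faà di Bruno's formula, which expands $\tfrac{\dd^m}{\dd s^m}\expS{g(s)}$ into a Bell-polynomial-style sum over multi-indices $(b_1,\dots,b_m)\in\set{B}_m$ satisfying $b_1+2b_2+\cdots+mb_m=m$, with each term proportional to $\expS{g(s)}\prod_{i=1}^{m}\left(g^{(i)}(s)/i!\right)^{b_i}$. Setting $s=0$ and writing $h_k(r)=g^{(k)}(0)/k!$ collapses every term to an expression in the $h_k(r)$'s times an overall prefactor $e^{g(0)}$.

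The $h_k(r)$'s themselves would be computed by differentiating under the integral in the formula for $g(s)$ from Theorem~\ref{Thm:1}; this is routine since the $s$-dependence enters only through $e^{\lambda_\drm \A(r,r_\drm,x)(s-1)}$, whose $k$-th derivative contributes an elementary multiplicative factor $(\lambda_\drm \A(r,r_\drm,x))^k$. A short bookkeeping check then confirms that $g(0)=h_0(r)-\lambda_{\pt}v_n(r+r_\drm)^n$, so the overall factor $e^{g(0)}$ splits cleanly into the two exponentials that appear in the claimed formula.

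Finally I would substitute the resulting PMF into \eqref{eq:cdf_rk_1} and pull the $m$-independent factor $e^{h_0(r)-\lambda_{\pt}v_n(r+r_\drm)^n}$ out of the sum over $m=0,\dots,k-1$; the $m=0$ contribution fits naturally because $\set{B}_0$ reduces to the empty tuple, whose associated product is $1$. The main obstacle I anticipate is purely the combinatorial bookkeeping in Faà di Bruno's formula and the need to keep careful track of the multi-index structure while separating the $m$-independent prefactor from the Bell-polynomial sum; once that structure is organized, the remainder is a direct substitution.
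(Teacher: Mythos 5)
Your proposal is correct and follows essentially the same route as the paper: apply the PGF--PMF inversion $\prob{\summ=m}=\mathcal{P}_{\summ}^{(m)}(0)/m!$ to $\mathcal{P}_{\summ}(s)=\expS{g(s)}$ via Fa\`a di Bruno's formula, evaluate $h_k(r)=g^{(k)}(0)/k!$ by differentiating under the integral, split $e^{g(0)}$ as $e^{h_0(r)-\lambda_{\pt}v_n(r+r_\drm)^n}$, and substitute into \eqref{eq:cdf_rk_1}. No gaps.
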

\begin{remark}\label{rem:1}
	The set $\mathrm{B}_{m}=\{(b_1,b_2...b_m):b_1+2b_2...+mb_m=m \}$ of all possible $m-$tuples can be easily computed for any $m$. For example, for $m=0,1,2$, it is:
	\begin{align*}
\mathrm{B}_{0}=\{(0)\},\mathrm{B}_{2}=\{(1,0)\},\,\mathrm{B}_{2}=\{(2,0),(0,1)\}.
	\end{align*}
\end{remark}
Furthermore, in the analysis of wireless networks, the first CD (just the CD) $R_1$ ($k=1$) plays a crucial role. For example, if BSs location follows MCP, then for any user under maximum power association, $R_1$ denotes the distance of the serving BS %the CD of the user from the base station. 
and $R_2$ and $R_3$ denote the distance of  the two dominant interferers. Using Remark \ref{rem:1} and  Theorem \ref{Thm:2}, we can easily derive the CDF of $R_k$ for $k=1,2,3$ as given in the following Corollary.
\begin{corollary}\label{expressionfork=1,2,3}
	The CDF of $R_k$ for $k=1,2,3$ is:
	\begin{align*}
	&F_{R_1}(r)=1-e^{h_0(r)-\lambda_{\pt}v_n(r+r_\drm)^n},\\
	&F_{R_2}(r)=1-e^{h_0(r)-\lambda_{\pt}v_n(r+r_\drm)^n}(1+h_1(r)),\\
	&F_{R_3}(r)=F_{R_2}(r)-e^{h_0(r)-\lambda_{\pt}v_n(r+r_\drm)^n}\left({h_2(r)}+{h_1^2(r)}/2\right).
	\end{align*}
\end{corollary}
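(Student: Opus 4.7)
The plan is to obtain the three formulas as direct specializations of Theorem \ref{Thm:2}; no new machinery is needed, so the proof is essentially a careful bookkeeping argument that exploits how small the index set $\set{B}_m$ is for $m\le 2$.

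First I would rewrite the conclusion of Theorem \ref{Thm:2} as
\begin{align*}
F_{R_k}(r)=1-e^{h_0(r)-\lambda_{\pt}v_n(r+r_\drm)^n}\sum_{m=0}^{k-1} S_m(r),\quad
S_m(r)\;\;\triangleq\;\;\sum_{\set{B}_m}\frac{h_1(r)^{b_1}\cdots h_m(r)^{b_m}}{b_1!\cdots b_m!},
\end{align*}
so that increasing $k$ only adds the single new summand $S_{k-1}(r)$ to the bracketed expression. The task then reduces to evaluating $S_0$, $S_1$, $S_2$.

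Next I would enumerate $\set{B}_m=\{(b_1,\dots,b_m):\sum_{j=1}^{m}jb_j=m,\; b_j\ge 0\}$ for $m=0,1,2$ by hand, as indicated in Remark \ref{rem:1}. For $m=0$ the only tuple is the empty one, so $S_0(r)=1$ (the empty product), which immediately yields the stated expression for $F_{R_1}(r)$. For $m=1$ the unique tuple is $(b_1)=(1)$, giving $S_1(r)=h_1(r)/1!=h_1(r)$; combining with $S_0$ and factoring out the common exponential produces $F_{R_2}(r)=1-e^{h_0(r)-\lambda_{\pt}v_n(r+r_\drm)^n}(1+h_1(r))$. For $m=2$ the two tuples $(2,0)$ and $(0,1)$ contribute $h_1(r)^2/(2!\,0!)=h_1^2(r)/2$ and $h_2(r)/(0!\,1!)=h_2(r)$ respectively, so $S_2(r)=h_2(r)+h_1^2(r)/2$. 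Adding this to the expression obtained for $k=2$ and using $F_{R_3}(r)-F_{R_2}(r)=-e^{h_0(r)-\lambda_{\pt}v_n(r+r_\drm)^n}S_2(r)$ yields the third identity.

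There is essentially no hard step: everything follows from Theorem \ref{Thm:2} once the partition sets are listed correctly. The only point that requires care is the $m=0$ case, where one must interpret $\set{B}_0$ as containing the empty tuple and the corresponding product and factorial as $1$, so that $S_0(r)=1$; getting this base case right is what makes the factor $1-e^{h_0(r)-\lambda_{\pt}v_n(r+r_\drm)^n}$ in $F_{R_1}$ come out correctly rather than producing a spurious missing term.
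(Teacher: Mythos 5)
Your proposal is correct and follows exactly the route the paper intends: specialize Theorem \ref{Thm:2} by enumerating $\set{B}_0$, $\set{B}_1$, $\set{B}_2$ (as in Remark \ref{rem:1}) and summing the resulting terms, with the empty-tuple convention giving $S_0(r)=1$. Your enumeration and the resulting values $S_1(r)=h_1(r)$ and $S_2(r)=h_2(r)+h_1^2(r)/2$ match the paper's stated expressions, so there is nothing to add.
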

\section{$k$th Nearest Neighbor Distance Distribution}
The $k$th NND $R^{'}_{k}$ of a 
PP is defined as the distance between the typical point of the PP to its $k$-nearest neighbor. Without loss of generality, we assume that the typical point $\z_\ob$ of $\Phi$ is located at the origin $\ob$. The CDF of $R^{'}_k(r)$ is:
\begin{align}
&F_{R^{'}_{k}}(r)=1-\mathbb{P}[\Phi(\Ball(\ob,r))-1\leq k-1|\z_\ob=\ob\in\Phi]\nonumber\\
&=1-\mathbb{P}^{!\ob}[N\leq k-1]=1-\sum_{m=0}^{k-1} \palmprobx{\summ=m }{!\ob},\label{prob_cdf_nnd}
\end{align}
where $\mathbb{P}^{!\ob}[\cdot]$ denote the reduced Palm measure which is  the distribution of $\Phi\setminus\{\ob\}$ conditioned on occurrence of a point at $\ob$ \cite{AndGupDhi16,haenggibook}.
Similar to the CD, we will first compute the PGF of $\summ$ under reduced Palm and then derive PMF. % from \eqref{PGF_PMF_relationship}.

\subsection{{The PGF of $N$ under reduced Palm}}
To derive PGF of $N$ under reduced Palm, we would need PGFL of MCP under reduced Palm which is defined as conditional PGFL of MCP excluding $\z_{\ob}$ conditioned on the occurance of a point  at $\z_{\ob}$,   and is given as \cite{ganti2009interference}, $\PGFL^{!}_{\z_\ob}[v]=$
\begin{align}
%&\PGFL^{!}_{\z_\ob}[v]=
%\mathbb{E}^{{\z_\ob}}\left[\prod_{\Z_i\in\Phi\setminus\{\Z_\ob\}}v(\Z_i)\right]
\mathbb{E}^{!{\z_\ob}}\left[\prod_{\Z_i\in\Phi}v(\Z_i)\right]
%\nonumber
%&=\PGFL[v]\ \frac{1}{v_n r_\drm^{n}}\int_{\B}\!\!\!\!\!\exp\left(-\lambda_\drm\int_{\B}\!\!\!\!\!\!\!\! (1-v(\y-\x_\ob) )\dv \y\right)\dv \x_\ob\nonumber\\%THIS IS CORRECT< BUT REMOVED DUE TO SPACE< IF SPACE ALLOWS UNCOMMENT THIS
%&
=\PGFL[v]\ \frac{1}{v_n r_\drm^{n}}\int_{\B}\!\!\!\!\! \mathcal{H}_{-\x_\ob}[v]\dv \x_\ob\label{eq:condpgfl}.
\end{align}
Here $\x_\ob$ denotes the cluster center of the typical cluster $\z_\ob$ belongs to.
Using the conditional PGFL of GPP, we derive the PGF of $\summ$ under reduced Palm which is given in Theorem \ref{thm:3}.  See Appendix \ref{appen_nnd_1} for the complete proof.

%%%%
%%%%%
%%%%%%%%%
\begin{theorem}\label{thm:3}
	The PGF of the number of point ($N$) of $\Psi\setminus\{\ob\}$ conditioned on $\ob\in\Phi$, falling in the ball $\Ball(\ob,r)$ is %, where $\ob\in\Psi$ is% (for proof see Appendix \ref{appen_nnd_1})
	\begin{align*}
&\mathcal{P}^{!\ob}_{\num}(s)=\mathcal{P}_{\summ}(s)\int_{0}^{r_\drm}e^{(s-1)\lambda_\drm\A(r,r_\drm,y)}{ny^{n-1}}{r_\drm^{-n}}\dv y.
\end{align*}
%	\begin{align*} &=\int_{0}^{r_\drm}e^{(s-1)\lambda_\drm\A(r,r_\drm,||\y||)}\frac{ny^{n-1}}{r_\drm^n}\dv y
%	\end{align*}
\end{theorem}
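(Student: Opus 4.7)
My plan is to mimic the strategy used for Theorem 1 but applied to the conditional PGFL \eqref{eq:condpgfl}. Concretely, I will choose the test function $v:\R^n\to[0,1]$ defined by $v(\z)=s$ if $\z\in\Ball(\ob,r)$ and $v(\z)=1$ otherwise, so that $\prod_{\Z_i\in\Phi}v(\Z_i)=s^{\num}$ with $\num$ the number of points of $\Phi$ in $\Ball(\ob,r)$. Taking the reduced Palm expectation and invoking \eqref{eq:condpgfl} immediately gives
\begin{align*}
\mathcal{P}^{!\ob}_{\num}(s)=\mathbb{E}^{!\ob}\!\left[\prod_{\Z_i\in\Phi}v(\Z_i)\right]=\PGFL[v]\cdot \frac{1}{v_n r_\drm^n}\int_{\B}\mathcal{H}_{-\x_\ob}[v]\,\dv\x_\ob.
\end{align*}

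The first factor is exactly the PGF obtained in Theorem 1, because $\PGFL[v]=\mathbb{E}[s^{\num}]=\mathcal{P}_{\summ}(s)$ by the same test-function choice used in Appendix A. So the real work is to simplify the second factor. For the chosen $v$ we have $1-v(\z)=(1-s)\ind{\dist{\z}\le r}$, hence
\begin{align*}
\mathcal{H}_{-\x_\ob}[v]=\exp\!\left(-\lambda_\drm(1-s)\int_{\Ball(\ob,r_\drm)}\ind{\dist{\y-\x_\ob}\le r}\,\dv\y\right).
\end{align*}
The integral inside the exponential is the volume of $\Ball(\ob,r_\drm)\cap\Ball(\x_\ob,r)$, which by the notation introduced in the preliminaries is precisely $\A(r,r_\drm,\dist{\x_\ob})$, depending on $\x_\ob$ only through its norm. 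Therefore $\mathcal{H}_{-\x_\ob}[v]=\exp\!\left((s-1)\lambda_\drm\A(r,r_\drm,\dist{\x_\ob})\right)$.

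Finally, I convert the $n$-dimensional integral over $\Ball(\ob,r_\drm)$ into a one-dimensional integral using spherical shells $\dv\x_\ob = n v_n y^{n-1}\dv y$ with $y=\dist{\x_\ob}$, which yields
\begin{align*}
\frac{1}{v_n r_\drm^n}\int_{\B}\mathcal{H}_{-\x_\ob}[v]\,\dv\x_\ob
=\int_{0}^{r_\drm} e^{(s-1)\lambda_\drm\A(r,r_\drm,y)}\,\frac{ny^{n-1}}{r_\drm^n}\,\dv y,
\end{align*}
and combining with $\PGFL[v]=\mathcal{P}_{\summ}(s)$ delivers the claimed expression.

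The steps are mostly bookkeeping once the right test function is chosen; the main potential pitfall is the coordinate shift in $\mathcal{H}_{-\x_\ob}[v]$, specifically recognizing that the shifted indicator $\ind{\dist{-\x_\ob+\y}\le r}$ translates the daughter ball $\Ball(\ob,r_\drm)$ relative to $\Ball(\ob,r)$ so that the relevant overlap is $\A(r,r_\drm,\dist{\x_\ob})$ rather than something involving additional terms. Once that identification is made cleanly, isotropy of the integrand in $\x_\ob$ justifies the reduction to a radial integral and the theorem follows directly.
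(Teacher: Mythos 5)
Your proof is correct and follows essentially the same route as the paper's Appendix B: the same test function $v(\x)=1+(s-1)\1(\x\in\Ball(\ob,r))$, the same invocation of the reduced-Palm PGFL \eqref{eq:condpgfl}, the same identification of $\mathcal{H}_{-\x_\ob}[v]$ with $\exp((s-1)\lambda_\drm\A(r,r_\drm,\dist{\x_\ob}))$ via the ball-intersection volume, and the same polar-coordinate reduction. No gaps.
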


\subsection{The PMF of $\summ$ under reduced Palm}
	Using the approach similar to Section \ref{sec:CD:PMF}, we get:
%%%%%%%
%%%%%%%%%%%%%%%%%%%
%%%%%%%%%%%%%%%%%%%%%%5
%%%%%%%%%%
\begin{corollary}\label{cor:PMFNND}
	The PMF of $N$ is % (for proof see Appendix \ref{CDF_nnd_proof}.):
	\begin{align*}
	&\palmprobx{\summ=m}{!\ob}=
	\sum_{i=0}^{m}{\Delta_i(r)}q_{m-i}(r),
	\end{align*}
where %$\mu(\y)=\lambda_{\drm}\A(r,r_\drm,||\y||)$ and 
$\Delta_{i}=\cF_{R_{i+1}}(r)-\1(i\geq1)\cF_{R_{i}}(r)$ and
$$q_{j}(r)=\frac1{j!}
	\int_{0}^{r_\drm}
	(\lambda_\drm \A(r,r_\drm,y))^{j}e^{-\lambda_\drm \A(r,r_\drm,y)}
	\frac{ny^{n-1}}{r_\drm^{n}}\dv y.$$
\end{corollary}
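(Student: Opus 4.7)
The plan is to exploit the product structure of the reduced Palm PGF established in Theorem \ref{thm:3}, which writes $\mathcal{P}^{!\ob}_N(s)=\mathcal{P}_N(s)\cdot Q(s)$ with $Q(s)=\int_0^{r_\drm}e^{(s-1)\lambda_\drm\A(r,r_\drm,y)}\,ny^{n-1}/r_\drm^n\,\dv y$. Since the product of two generating functions is the generating function of the convolution of their coefficient sequences, once both factors are identified as PGFs of non-negative integer valued RVs, the PMF under reduced Palm falls out as a discrete convolution, giving exactly the form claimed in Corollary \ref{cor:PMFNND}.

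First I would expand $Q(s)$ in powers of $s$. Factoring the integrand as $e^{-\lambda_\drm\A(r,r_\drm,y)}\sum_{j\ge0}(\lambda_\drm\A(r,r_\drm,y)\,s)^j/j!$ and swapping sum and integral (justified since the series converges uniformly for $s$ in any bounded set because $\A(r,r_\drm,y)$ is uniformly bounded on $[0,r_\drm]$), one obtains $Q(s)=\sum_{j=0}^{\infty} q_j(r)\,s^j$ with $q_j(r)$ precisely the quantity in the statement. Because $\int_0^{r_\drm} ny^{n-1}/r_\drm^n\,\dv y=1$ and each integrand weight is a Poisson probability, the coefficients $\{q_j(r)\}$ form a proper PMF (a mixture of Poisson PMFs over the uniform law of a radial offset), so $Q(s)$ is itself a PGF.

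Next I would identify $\mathcal{P}_N(s)=\sum_i \Delta_i(r)\,s^i$. Since $R_{i+1}>r$ is equivalent to having strictly fewer than $i+1$ points of the MCP in $\Ball(\ob,r)$, we have $\cF_{R_{i+1}}(r)=\mathbb{P}[N\le i]$, and differencing yields $\mathbb{P}[N=i]=\cF_{R_{i+1}}(r)-\cF_{R_i}(r)\1(i\ge1)=\Delta_i(r)$, with the indicator absorbing the $i=0$ boundary case $\mathbb{P}[N=0]=\cF_{R_1}(r)$. This identification is just a re-encoding of the PMF already computed from Faà di Bruno in Section \ref{sec:CD:PMF}, rewritten in the CCDF notation used by Theorem \ref{Thm:2} and Corollary \ref{expressionfork=1,2,3}.

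Finally, multiplying the two power series and reading off the coefficient of $s^m$ in $\mathcal{P}^{!\ob}_N(s)$ via the general PGF-PMF relation \eqref{PGF_PMF_relationship} gives $\palmprobx{N=m}{!\ob}=\sum_{i=0}^{m}\Delta_i(r)\,q_{m-i}(r)$, which is the stated formula. The argument is essentially bookkeeping; the only delicate point is justifying the sum-integral swap in the expansion of $Q(s)$, and this is immediate from the uniform bound $\A(r,r_\drm,y)\le v_n\min(r,r_\drm)^n$. One could alternatively apply Faà di Bruno directly to the product $\mathcal{P}_N(s)\,Q(s)$ as was done for the CD case, but recognizing the product as a convolution of PMFs yields a much cleaner closed form and avoids reintroducing the partition set $\set{B}_m$ twice.
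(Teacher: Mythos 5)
Your proposal is correct and follows essentially the same route as the paper: the paper's Appendix C applies the general Leibniz rule to the $m$-th derivative of the product $\mathcal{P}_{N}(s)\,Q(s)$ at $s=0$, which is exactly the Cauchy-product/convolution of coefficients you describe, and it makes the same identification $\mathbb{P}[N=i]=\cF_{R_{i+1}}(r)-\1(i\geq1)\cF_{R_{i}}(r)$. Your added remarks (the sum--integral swap and the observation that the $q_j(r)$ form a proper PMF) are sound but only repackage the same computation.
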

\begin{proof}
See Appendix \ref{CDF_nnd_proof}.
\end{proof}

Now, using Corollary \ref{cor:PMFNND} in \eqref{prob_cdf_nnd} and some manipulations (See Appendix \ref{app:4} for details), we get the following theorem. 
\begin{theorem}
	The CDF of the $k$th NND $R^{'}_{k}$ for the MCP is:
\begin{align}
F_{R^{'}_k}(r)&=1-\sum_{i=1}^{k}q_{k-i}(r)\cF_{R_{i}}(r)\label{version3}
\end{align}
where 
$ \overline{F}_{R_k}(r)$ denotes the complementary CDF (CCDF) of $ R_{k}$ i.e.
	$\overline{F}_{R_k}(r)=1-F_{R_k}(r).$
\end{theorem}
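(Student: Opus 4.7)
The plan is to start from equation \eqref{prob_cdf_nnd}, which expresses $F_{R'_k}(r)$ in terms of the reduced Palm probabilities $\palmprobx{N = m}{!\ob}$ for $m = 0,\ldots,k-1$, substitute the convolution-type formula from Corollary \ref{cor:PMFNND}, then exchange the order of summation and exploit a telescoping cancellation in the $\Delta_i(r)$ terms.

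First I would write
\begin{align*}
\sum_{m=0}^{k-1}\palmprobx{\summ=m}{!\ob}
= \sum_{m=0}^{k-1}\sum_{i=0}^{m}\Delta_i(r)\, q_{m-i}(r).
\end{align*}
Substituting $j=m-i$ converts the double sum into one over the triangular set $\{(i,j):i\ge 0,\,j\ge 0,\,i+j\le k-1\}$, and summing over $i$ first gives
\begin{align*}
\sum_{m=0}^{k-1}\palmprobx{\summ=m}{!\ob}
= \sum_{j=0}^{k-1} q_{j}(r)\sum_{i=0}^{k-1-j}\Delta_i(r).
\end{align*}

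Next I would observe that by the definition $\Delta_0(r)=\cF_{R_1}(r)$ and $\Delta_i(r)=\cF_{R_{i+1}}(r)-\cF_{R_i}(r)$ for $i\ge 1$, the inner sum telescopes:
\begin{align*}
\sum_{i=0}^{k-1-j}\Delta_i(r)
= \cF_{R_1}(r)+\sum_{i=1}^{k-1-j}\bigl[\cF_{R_{i+1}}(r)-\cF_{R_i}(r)\bigr]
= \cF_{R_{k-j}}(r).
\end{align*}
This identity is the key step; it is also the natural sanity check, since it simply recovers the fact that $\sum_{i=0}^{n}\Pr[\summ=i]=\Pr[\summ\le n]=\cF_{R_{n+1}}(r)$ for the unconditioned MCP.

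Finally I would reindex with $i=k-j$, so that $j=0,\ldots,k-1$ becomes $i=k,\ldots,1$, yielding
\begin{align*}
\sum_{m=0}^{k-1}\palmprobx{\summ=m}{!\ob}
= \sum_{i=1}^{k} q_{k-i}(r)\,\cF_{R_i}(r),
\end{align*}
and substituting this back into \eqref{prob_cdf_nnd} gives \eqref{version3}. The only nontrivial step is recognising the telescoping pattern after the index swap; the rest is bookkeeping. No new integrations or generating-function manipulations are required, since Theorem \ref{thm:3} and Corollary \ref{cor:PMFNND} already do the analytic work.
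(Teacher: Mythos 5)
Your proof is correct and follows essentially the same route as the paper's Appendix D: both substitute Corollary \ref{cor:PMFNND} into \eqref{prob_cdf_nnd} and then cancel the double sum, the only difference being that you swap the summation order and telescope $\sum_{i=0}^{k-1-j}\Delta_i(r)=\cF_{R_{k-j}}(r)$, whereas the paper splits $\Delta_i$ into two sums and shifts indices to achieve the identical cancellation.
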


\begin{corollary}\label{nnd_thm_2}
	The CDF of $R^{'}_{k}$ for $k=1,2,3$ is:
\begin{align*}
&F_{R^{'}_1}(r)=1-\cF_{R_1}(r)q_{0}(r),\\
&F_{R^{'}_2}(r)=1-q_{1}(r)\cF_{R_1}(r)-q_{0}(r)\cF_{R_2}(r),\\
&F_{R^{'}_3}(r)=1-q_{2}(r)\cF_{R_1}(r)-{q_{1}(r)}\cF_{R_2}(r)-{q_{0}(r)}\cF_{R_3}(r).
\end{align*}
\end{corollary}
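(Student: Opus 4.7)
The plan is to obtain the corollary as a direct specialization of \eqref{version3}. Since the preceding theorem already supplies
$$F_{R^{'}_k}(r)=1-\sum_{i=1}^{k}q_{k-i}(r)\cF_{R_{i}}(r),$$
no fresh probabilistic argument is required; the three cases are recovered by evaluating this finite sum at $k=1,2,3$ and relabelling the index $k-i$.

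First I would substitute $k=1$. The sum collapses to the single term $q_0(r)\cF_{R_1}(r)$, giving the first identity. For $k=2$, the range $i=1,2$ produces the two-term sum $q_1(r)\cF_{R_1}(r)+q_0(r)\cF_{R_2}(r)$, which yields the second identity after combining with the leading $1$. The case $k=3$ is analogous: three terms indexed by $i=1,2,3$ with $q$-weights $q_2(r),q_1(r),q_0(r)$ respectively, producing $1-q_2(r)\cF_{R_1}(r)-q_1(r)\cF_{R_2}(r)-q_0(r)\cF_{R_3}(r)$.

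All quantities appearing on the right-hand sides are already in hand: the weights $q_0(r),q_1(r),q_2(r)$ are defined in Corollary \ref{cor:PMFNND}, and the CCDFs $\cF_{R_1}(r),\cF_{R_2}(r),\cF_{R_3}(r)$ are obtained by taking complements of the CDF expressions given in Corollary \ref{expressionfork=1,2,3}, which themselves are written in terms of $h_0(r),h_1(r),h_2(r)$ from Section \ref{sec:CD:PMF}. Consequently there is no genuine obstacle: the only care needed is bookkeeping on the index shift $k-i$ so that the $q_j(r)$ weights appear in the correct order on each row. As a consistency check, the resulting $k=1$ identity $F_{R^{'}_1}(r)=1-\cF_{R_1}(r)q_0(r)$ should match the NND expression for MCP previously reported in \cite{afshang2017nearest,pandey2019contact}.
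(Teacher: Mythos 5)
Your proposal is correct and matches the paper's (implicit) derivation: the corollary is simply the theorem's formula $F_{R^{'}_k}(r)=1-\sum_{i=1}^{k}q_{k-i}(r)\cF_{R_{i}}(r)$ evaluated at $k=1,2,3$, and your index bookkeeping for the weights $q_{k-i}(r)$ is right in all three cases.
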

Fig. \ref{Fig:1}  shows CDF for $k$th CD and NND for  MCP obtained from analysis and simulation which validates our analysis.
\begin{figure}[t]
	\centering
	\begin{minipage}[t]{.5\linewidth}
		\includegraphics[width=1\textwidth]{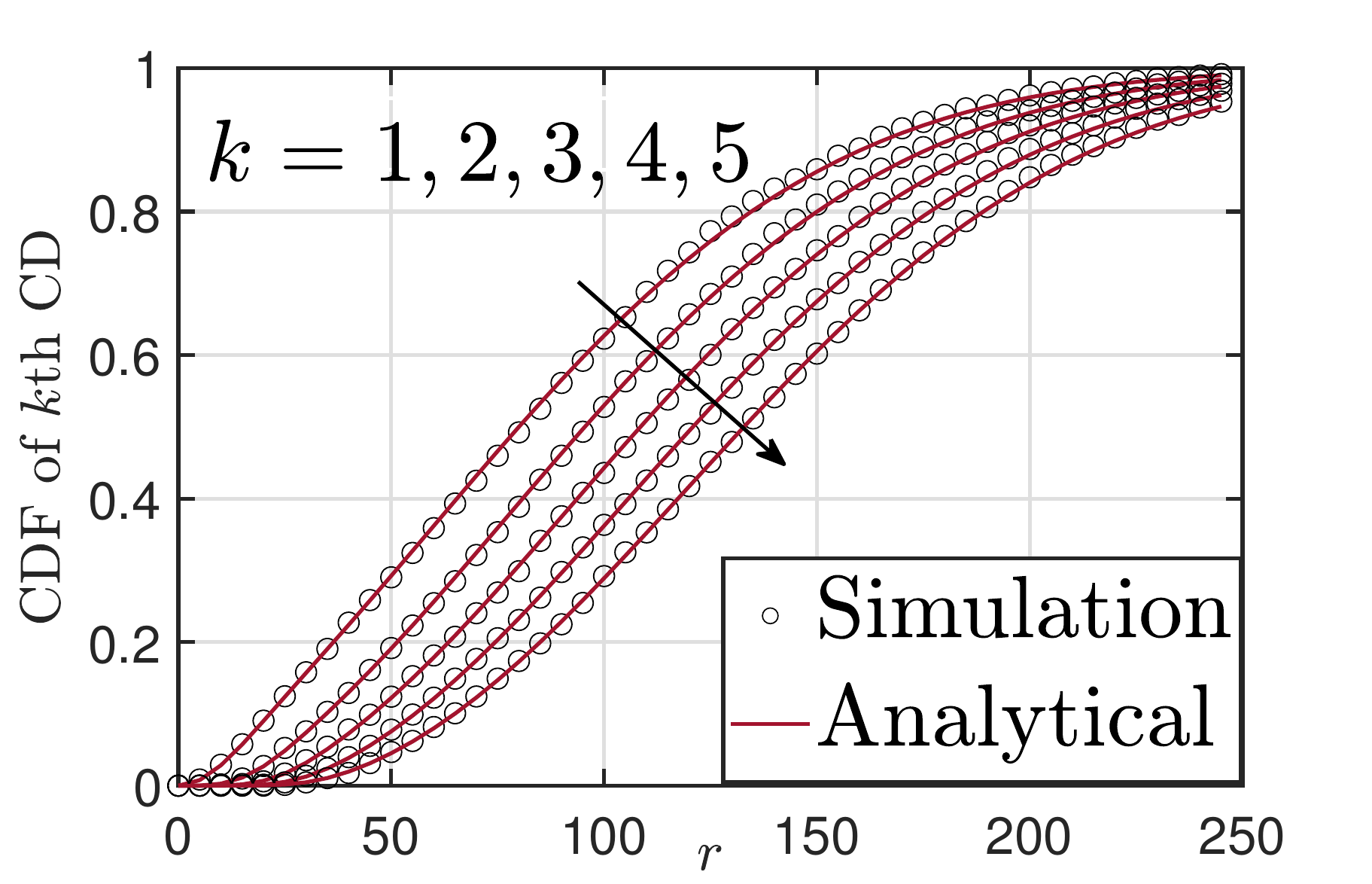}
	\end{minipage}%
	\hfill%
	\begin{minipage}[t]{.5\linewidth}
		\includegraphics[width=1\textwidth]{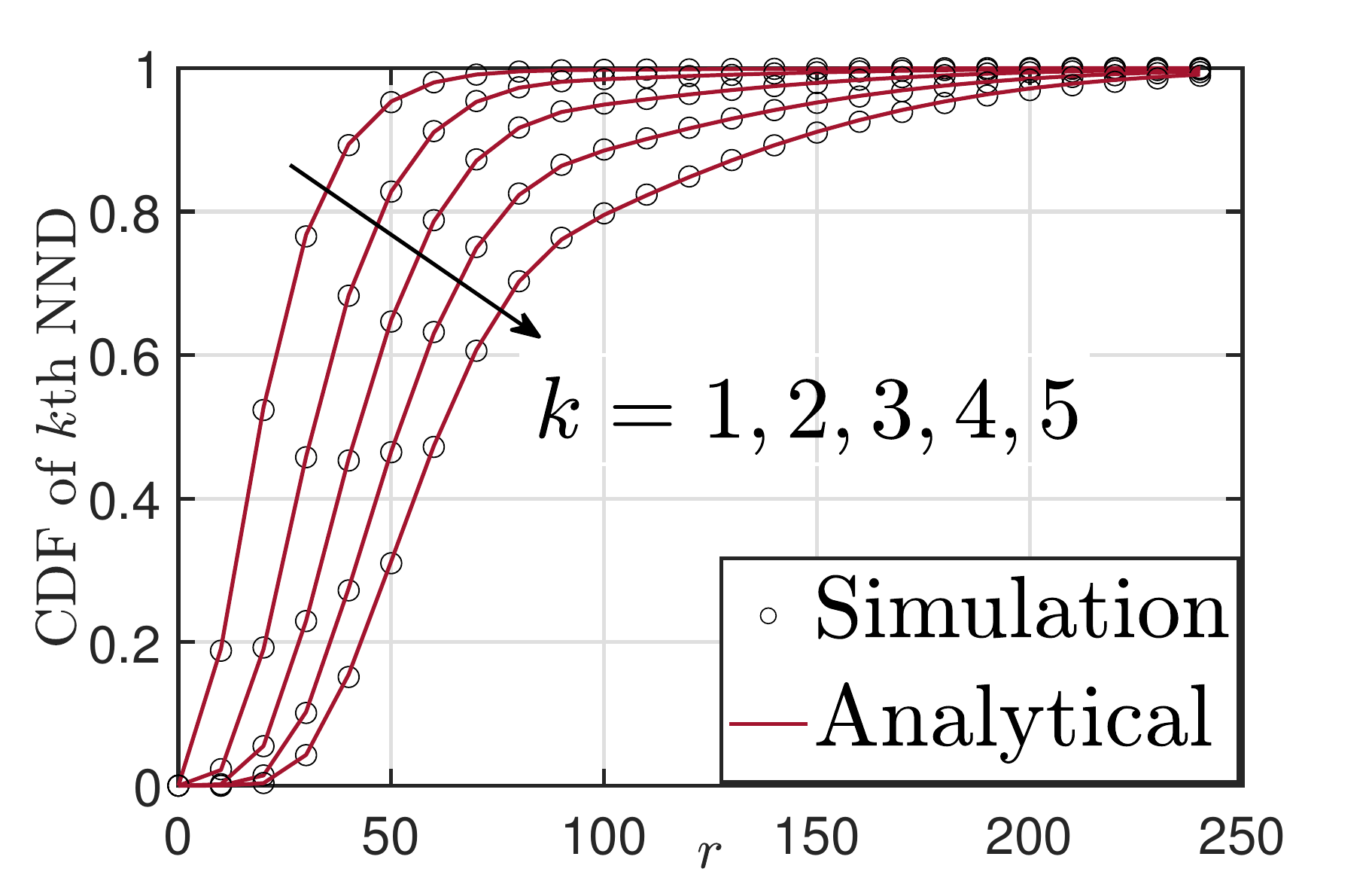}
	\end{minipage}
	\caption{The CDFs of $k$th CD and NND  with $\lambda_{\pt}=2\times10^{-5},\mo=5,r_\drm=50$. }\label{Fig:1} \vspace{-.2in}
\end{figure}
%\vspace{-2em}
\section{Relation between CD and NND Distributions}
 \eqref{version3} establishes a relation between the CDF of $R_{k}$ and $R^{'}_{k}$. Note that \begin{align}\label{convex_combo}
		&\lim_{k\rightarrow\infty}\sum\nolimits_{i=0}^{k}q_{j}(r)= 1,
		\end{align}
and therefore $q_j\le 1$ $\forall$ $j$. This means that the CCDF of $R^{'}_{k}$ is a linear combination of CCDFs of $\{R_{i},\,\forall i\leq k\}$. Moreover, $q_j(r)$ denotes the probability that the typical point has more intra-cluster $j$ points within distance $r$ from it.
	 \subsection{The first order stochastic dominance of $R_k$ over $R^{'}_k$:} The stochastic dominance of $R_1$ over $R^{'}_1$ is  proved in  \cite{afshang2017nearest}. 
	 Note that 
	 \begin{align*}
	 \cF_{R^{'}_k}\!(r)&=\sum_{i=1}^{k}q_{k-i}(r)\cF_{R_{i}}\!(r)\le \sum_{i=1}^{k}q_{k-i}\!(r)\cF_{R_{k}}\!(r)\le \cF_{R_{k}}(r).
	 \end{align*}
	 This implies the stochastic dominance of $R_{k}$ over $R^{'}_k$ $\forall$ $k$. 	 
	  \subsection{Asymptotic results with respect to $r_\drm$ and fixed $\mo$:} As we decrease the cluster radius $r_\drm\rightarrow0$, $\A(r,r_\drm,y)\rightarrow v_n r_\drm^n$. Therefore, the CDF of NND simplifies to
	\begin{align}
	&F_{R^{'}_k}(r)=1-e^{-\mo}\sum_{i=1}^{k}\frac{{\mo}^{k-i}\cF_{R_{i}}(r)}{(k-i)!}.
	\end{align}
	%recall that $j=k-i$.
	Similarly, as $r_\drm\rightarrow\infty$,  $\A(r,r_\drm,y)\rightarrow v_n r^n$. Hence, $\mathcal{P}_\summ(s)\rightarrow \expS{\pi\lambda_\pt \mo (s-1_r^n}$ which is the PGF of $\summ$ for  a PPP with density $\lambda_\pt \mo$. This implies that $k$ CD distribution of MCP also converges to that of PPP. Also, from  Theorem \ref{thm:3}, 
		\begin{align*}
	&{\mathcal{P}}^{!\ob}_{\num}(s)=\mathcal{P}_{N}(s)\implies
	\lim_{r_\drm\rightarrow\infty}F_{R^{'}_{k}}(r)\rightarrow F_{R_k}(r).
	\end{align*}
\section{Applications in Large Wireless Networks}
We now discuss two applications to understand how obtained results can be used in the analysis of  wireless networks.
	
\newcommand{\kconnprob}{\mathrm{p}_k}
\subsection{Cellular networks with macro-diversity}
To enhance the communication reliability, cellular networks can utilize macro-diversity \cite{GupAndHea2018} where a user can receive transmission from multiple BSs. Let us consider a cellular network with BSs deployed as a MCP and a typical user at the origin. We assume that the user can establish a connection to a BS if its distance is less than a certain range $R$, termed connectivity radius.  Then, $k$-connectivity probability ($\kconnprob$) is defined as the probability that the user can connect to at least $k$ BS \ie% to receive communication \ie 
\begin{align*}
\kconnprob=\mathbb{P}[R_{k}\leq R]=F_{R_k}(R).
\end{align*}
We are interested in studying the impact of cluster radius on the connectivity probability.  Fig. \ref{Fig:2} shows the variation of $\kconnprob$ with $r_\drm$, while keeping $\mo$ constant, to show the impact of clustering. The left extreme $r_\drm\rightarrow0$ denotes very high level of clustering and right extreme denotes no clustering (PPP behavior).
The main observations are as follows:

\textbf{$k=1$:}  %The first connectivity probability 
$\mathrm{p}_1$ always increases with $r_\drm$ implying that clustering hurts the first connection. This can be shown from the expression of $F_{R_1}(R)$. This behavior is observed because higher $r_\drm$ causes nodes to spread more, making the closest node come within $R$ with higher probability. 

 \textbf{$k>1$:}  An increase in $r_\drm$ spreads nodes. With $r_\drm$, the following two competing factors determine $\kconnprob$'s behavior
 \begin{itemize}
 \item[-] if the $k$th closest node belongs to the same  cluster of the first closest node,  it moves away from $\ob$ decreasing $\kconnprob$
 \item[-] if the $k$th closest node belongs to any other cluster, this node comes closer  increasing $\kconnprob$.
\end{itemize}
If the  $\lambda_\pt$ is low, then first case is more likely, while second case is more likely for high $\lambda_\pt$. Therefore, $\lambda_\pt$ determines which one of the two factor dominates. We can observe that, with $r_\drm$, $\kconnprob$   increases for high $\lambda_\pt$ and decreases for low $\lambda_\pt$.

\begin{figure}[ht]
	\centering
	\includegraphics[width=.8\textwidth]{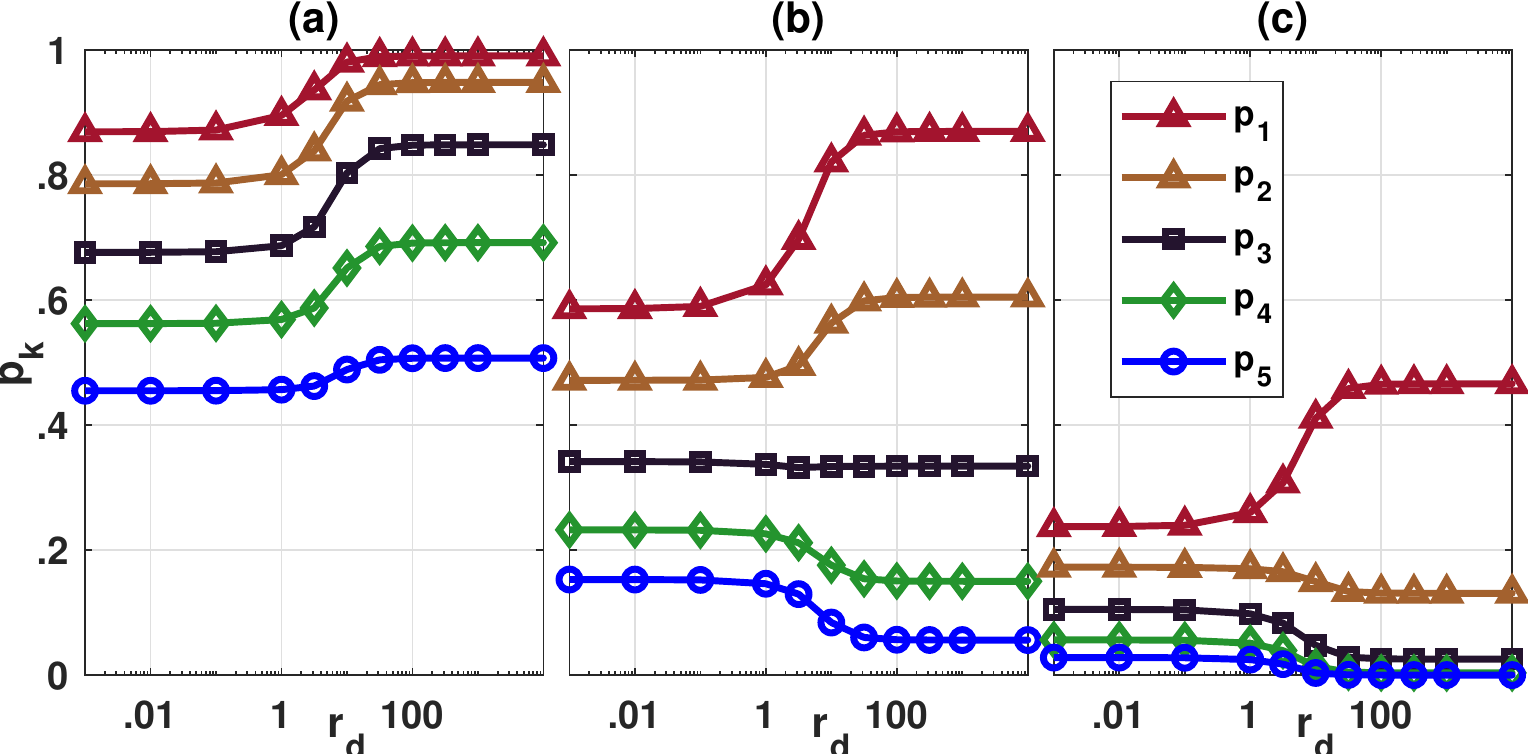}
	\caption{The variation of $k$-connectivity probability with $r_\drm$ for three values of $\lambda_{\pt}$: 
	(a) $3\times10^{-2}$, (b) $1.3\times10^{-2}$ and (c) $0.4\times10^{-2}$. Here $\mo=2$, $R=5$.} \label{Fig:2}
	\vspace{-.2in} 
\end{figure}
\newcommand{\frp}{\mathrm{f}_k}
\begin{figure}[ht]
	\centering
	\includegraphics[width=.8\textwidth]{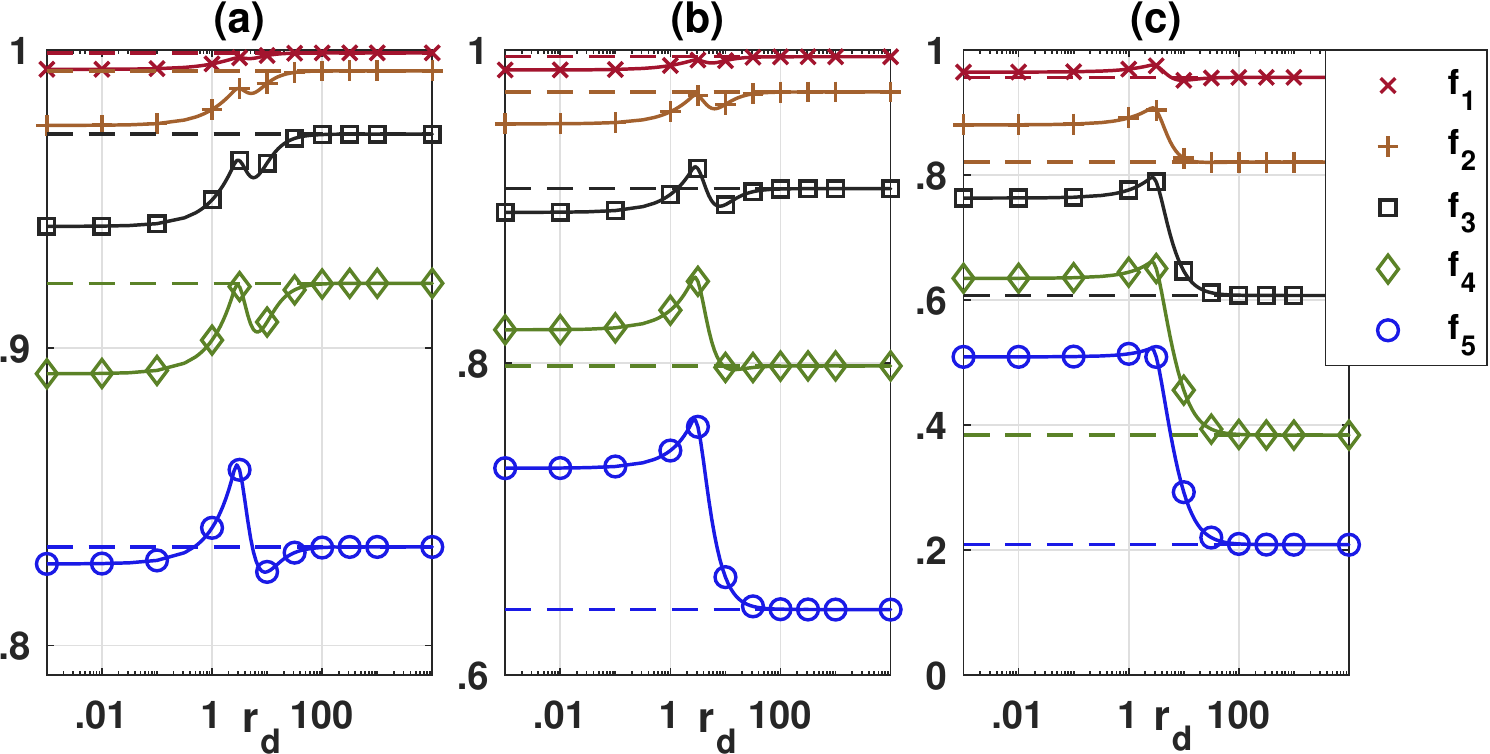}
	\caption{The variation of $\frp$ with $r_\drm$ for three values of $\lambda_{\pt}$: 
	(a) $4.5\times10^{-2}$, (b) $3.5\times10^{-2}$ and (c) $2\times10^{-2}$. Here $\mo=2$, $R=5$.} \label{Fig:3}\vspace{-.2in}  
	%[.045 .035 .02];
\end{figure}

\subsection{Caching  in D2D networks}
To reduce the file access time and dependency on  central servers, nodes in a D2D network can implement caching where each node keeps copies of some popular files locally. A node can access content available at other nodes in its communication range $R$ \cite{MalShaAnd2016}. To have a right balance between space  and latency requirements, only few files are kept at each node while ensuring that all popular files are kept at at least one node of $k$ neighboring nodes. Consider a D2D networks with nodes located as MCP and a typical node $\z_\ob$ at the origin requesting a file. The   performance of the D2D network can be measured by file retrieval probability or cache hit probability which  is defined as the the probability that at least $k$ neighbors are in the communication range of a typical node, \ie
\begin{align*}
\frp=\mathbb{P}^{\ob}[R'_{k}\leq R]=F_{R'_k}(R).
\end{align*}
Our goal is to understand the impact of $r_\drm$ on $\frp$.  Fig. \ref{Fig:3} shows the variation of $\frp$ with $r_\drm$, while keeping $\mo$ constant, to show the impact of clustering. The dashed lines represent the corresponding $\frp$ for a PPP with density $\mo \lambda_\pt$. %Left extreme $r_\drm\rightarrow0$ denotes very high level of clustering and right extreme denotes no clustering (PPP behavior).
 The observed behavior can be justified in the following way.  Here also, with $r_\drm$, there are two competing factors determining $\frp$'s behavior
 \begin{itemize}
 \item[-] if the $k$th neighbor belongs to the typical  cluster,  it moves away from $\z_\ob$ decreasing $\frp$
 \item[-] if the $k$th neighbor belongs to any other cluster, this nodes comes closer  increasing $\frp$.
\end{itemize}
%$\lambda_\pt$ determines which one of the two factor dominates.

When $r_\drm$ is very small, all nodes of the typical cluster are within communication range. Changing $r_\drm$ doesn't affect $\frp$. 

Now, at lower $\lambda_\pt$, increasing $r_\drm$ can spill intra-cluster nodes outside the communication range, decreasing the $\frp$ significantly. Since other clusters are far away, changing $r_\drm$ doesn't bring them in communication range. At large $r_\drm$, nodes start exhibiting the PPP behavior, making $\frp$ independent of $r_\drm$.

At higher $\lambda_\pt$, nodes from other clusters start playing role in determining $\frp$. As $r_\drm$ increases, these nodes can come inside the connection region and become one of $k$ neighbors which increases $\frp$. As $r_\drm$ further increases, intra-cluster nodes goes outside the connection range decreasing $\frp$ sharply. This fall continues until $r_\drm$ becomes significantly larger than $R$. After that, intra-cluster nodes are outside the connection radius with high probability. Hence, $\frp$ again starts increasing due to increasing proximity of nodes of other clusters. At large $r_\drm$, $\frp$ becomes independent of $r_\drm$ due to limiting PPP nature.

\appendices
%\vspace{-1em}
\section{} \label{Appn:1}

The  PGF of $\summ$ is defined as
\begin{align*}
&\mathcal{P}_{\summ}(s)=\mathbb{E}\left[s^{\summ}\right]=\expect{\prod_{\Z\in\Phi} s^{\indside{\Z \in \Ball(\ob,r)}}}
\end{align*}
which is nothing but the PGFL $\PGFL[v]$ of MCP for $v(\x)=s^{\1(\x\in\B(\ob,r))}$.
	Further note that 
\begin{small}
	\begin{align*}
	%&,\\
	%&=s \1(\x\in\B)+1 \times \1(\x\notin\B),\\
	v(\x)&=s \times \1(\x\in\Ball(\ob,r))+1 \times (1-\1(\x\in\Ball(\ob,r)))\\
	%&=(s-1)\1(\x\in\B)+1\\
	&=1+(s-1)\1(\x\in\Ball(\ob,r)).
	\end{align*}
\end{small}
From the PGFL \eqref{PGFL}, we get
\begin{small}
	\begin{align}
	\mathcal{P}_{\summ}(s)=\PGFL[v]&=\exp\left(-\lambda_{\pt}\int_{\R^{n}}\left(1-\mathcal{H}_{\x}[v] \right)\dv \x\right)\label{app1:pgfl}
	\end{align}
\end{small}
with
\begin{small}
		\begin{align}
	\mathcal{H}_{\x}[v] &=\exp\left(-\lambda_\drm
	\int_{\Ball(\ob,r_\drm)}\left(1-v(\x+\y)\right)\dv\y
	\right)\nonumber\\
	& =\exp\left(\lambda_\drm
	\int_{\Ball(\ob,r_\drm)} 
	\left((s-1)\1(\x+\y\in\Ball(\ob,r))\right)\dv\y
	\right)\nonumber\\
	& =\exp\left(\lambda_\drm
	(s-1)\A(r,r_\drm,\dist{\x})
	\right).\label{eq:Hxv}
	\end{align}	
\end{small}
Replacing the expression of $\mathcal{H}_{\x}[v]$ in  \eqref{app1:pgfl} we get
%\begin{small}
\begin{small}
		\begin{align*}
	%\PGFL[v]&=\exp\left(-\lambda_{\pt}\int_{\R^{n}}\left(1-\exp\left(\mo\left[\mathbb{E}_{Y}[v(\x+\Y)]-1\right]\right)\dv \x\right) \right)\nonumber\\
	&&\mathcal{P}_{\summ}(s)=\exp\left(-\lambda_{\pt}\int_{\R^{n}}\left(1-e^{\lambda_{\drm}\A(r,r_\drm,\dist{\x})(s-1)}\right)\dv \x \right)\label{derivative}.
	\end{align*}
\end{small}
Simplifying the above expression, we get the Theorem \ref{Thm:1}.
\section{}\label{appen_nnd_1}
The typical point $\z_\ob$ is located at the origin $\ob$. Under reduced Palm, the number of points falling in the ball $\Ball(\ob,r)$ is:
\begin{small}
	\begin{align*}
	\num&=\sum\nolimits_{\Z_i\in\Phi\setminus\{\z_\ob\}}\1(\Z_i\in\Ball(\ob,r)).
	%&=\sum_{\x_i\in\Phi_{\mathrm{p}}}M(\x_i)+M(\y)
	\end{align*}
\end{small}
%The probability generating function for the ${N}$ is:
The  PGF of $N$ under reduced Palm is 
\begin{small}
	\begin{align*}
	&\mathcal{P}^{!\ob}_{N}(s)=\mathbb{E}^\ob\left[s^{\sum_{\Z_i\in\Phi\setminus\{\ob\}}\1(\Z_i \in \Ball(\ob,r))}\right],\\
	%%&=\mathbb{E}\left[s^{\sum_{\x_i\in\Psi}\1(\x_i \in \B)}\right]\\
	&=\mathbb{E}^\ob\left[\prod_{\Z_i\in\Phi\setminus\{\ob\}}s^{\1(\rx_i\in \Ball(\ob,r))}\right]=\mathbb{E}^{!\ob}\left[\prod_{\Z_i\in\Phi}s^{\1(\Z_i\in \Ball(\ob,r))}\right],
	\end{align*}
\end{small}
which is nothing but the PGFL under reduced Palm $\PGFL^{!}_{\ob}[v]$ of MCP (given in \eqref{eq:condpgfl}) for
\begin{center}
$v(\x)=s^{\1(\x\in\B)}=1+(s-1)\1(\x\in\B)$.
\end{center}
Therefore
\begin{small}
	\begin{align*}
	\mathcal{P}^{!\ob}_{N}(s)
	&=\mathcal{P}_N(s)\ \frac{1}{v_n r_\drm^{n}}\int_{\B}\!\!\!\!\! \mathcal{H}_{-\x_\ob}[v]\dv \x_\ob
	\end{align*}
\end{small}
Now using \eqref{eq:Hxv} and converting to polar coordinates, we get
\begin{small}
		\begin{align*}
	&\mathcal{P}^{!\ob}_{\num}(s)=\mathcal{P}_{\summ}(s)\int_{0}^{r_\drm}e^{(s-1)\lambda_\drm\A(r,r_\drm,x)}\frac{nx^{n-1}}{r_\drm^{n}}\dv x.
	\end{align*}
\end{small}
\vspace{-2.1em}
\section{}\label{CDF_nnd_proof}
%To derive the CCDF $\cF_{R^{'}_k}$ of $R^{'}_k$ first we derive the PMF of $N$ using the PGF as
 Applying PGF-PMF relationship similar to Section \ref{sec:CD:PMF}-B and Theorem \ref{thm:3}, the PMF of $N$ under Palm is given as
\begin{small}
	\begin{align*}
	&\mathbb{P}^{!\ob}[N=m]=\frac1{m!}\frac{\dv^m}{\dv s^m}\mathcal{P}^{!\ob}_{N}(s)\bigconditioned_{s=0}\\
	&\stackrel{(a)}=\frac{1}{m!}\sum_{i=0}^{m}{m\choose i}\mathcal{P}^{(i)}_{N}(0)\int_{0}^{r_\drm} {(\lambda_\drm\A(r,r_\drm,x))}^{m-i}e^{-\lambda_\drm\A(r,r_\drm,x)}\frac{nx^{n-1}}{r_\drm^{n}}\dv x,
	\end{align*}
\end{small}
where $(a)$ is achieved using general Leibniz rule. Using \eqref{PGF_PMF_relationship} and definition of $q_j()$, 	
\begin{small}
		\begin{align*}
	\mathbb{P}^{!\ob}[N=m]
	&=\sum_{i=0}^{m}\frac{1}{m!}{m\choose i}i!\mathbb{P}[N=i]q_{m-i}(r) (m-i)!\\
	&=\sum_{i=0}^{m}\mathbb{P}[N=i]q_{m-i}(r).
	\end{align*}
\end{small}
% and $(b)$ is obtained using the PGF-PMF relationship.
Now noting that $\mathbb{P}[N=i]=\cF_{R_{i+1}}(r)-\1(i\geq1)\cF_{R_{i}}(r)$, we get the desired result.

\section{}\label{app:4}
Now, using Corollary \ref{cor:PMFNND} in \eqref{prob_cdf_nnd},  the CDF of $R^{'}_{k}$ is
\begin{small}
	\begin{align*}
	&F_{R^{'}_k}(r)=1-\sum_{m=0}^{k-1}\left[\sum_{i=0}^{m}q_{m-i}(r)\cF_{R_{i+1}}(r)-\sum_{i=1}^{m}\cF_{R_{i}}(r)q_{m-i}(r)\right].
	\end{align*}
	\end{small}
%where $l=m-i$. 
Replacing $i^{'}=i+1$, and $m^{'}=m+1$  in the first term, enables us to cancel all terms except the term corresponding to  $m=k-1$ which gives the desired result. 
% Generated by IEEEtran.bst, version: 1.14 (2015/08/26)

\vspace{12pt}
\color{red}
\end{document}